\documentclass[conference]{IEEEtran}
\usepackage{epsfig,latexsym,amsmath,amssymb,setspace,cite,color,graphicx,algorithm,algorithmic,cases}
\usepackage{amssymb,amsmath,amsfonts,bm,epsfig,graphicx,latexsym}
\usepackage{rotating,setspace,latexsym,epsf,color}
\usepackage{cite,authblk,epstopdf,footnote}
\usepackage{float}
\usepackage{tabularx}
\usepackage{array}
\usepackage[subrefformat=parens,labelformat=parens]{subfig}
\usepackage{bbm}
\usepackage{amsthm}
\usepackage{thmtools}
\usepackage{algorithm,algorithmic}

\declaretheorem[style=plain,name=Definition,qed=$\blacksquare$]{Definition}

\declaretheorem[style=plain,name=Remark,qed=$\blacksquare$]{remark}

\declaretheorem[style=plain,qed=$\square$]{theorem}
%%
%% Conference papers do not typically use \thanks and this command
%% is locked out in conference mode. If really needed, such as for
%% the acknowledgment of grants, uncomment the following:
%\IEEEoverridecommandlockouts
%\IEEEoverridecommandlockouts

\allowdisplaybreaks
\def\mc{\ensuremath\mathcal}

\begin{document}

%\sloppy

%% Paper Title
%% You can use linebreaks \\ within to get better formatting as
%% desired. 
\title{Benefits of Coded Placement for Networks with Heterogeneous Cache Sizes}
\author{Abdelrahman M. Ibrahim}
\author{Ahmed A. Zewail}
\author{Aylin Yener}
\affil{\normalsize Wireless Communications and Networking Laboratory (WCAN)\\
Electrical Engineering and Computer Science Department\\
The Pennsylvania State University, University Park, PA 16802.\\
\em \{ami137,zewail\}@psu.edu \qquad yener@engr.psu.edu}

\maketitle

\begin{abstract}
In this work, we study coded placement in caching systems where the users have unequal cache sizes and demonstrate its performance advantage. In particular, we propose a caching scheme with coded placement for three-user systems that outperforms the best caching scheme with uncoded placement. In our proposed scheme, users cache both uncoded and coded pieces of the files, and the coded pieces at the users with large memories are decoded using the unicast/multicast signals intended to serve users with smaller memories. Furthermore, we extend the proposed scheme to larger systems and show the reduction in delivery load with coded placement compared to uncoded placement.
\end{abstract}
%\footnotetext[1]{This work was done when Abdelrahman M.Ibrahim was with the Wireless Intelligent Networks Center (WINC), Nile University, Giza, Egypt.}
%\footnotetext[2]{This material is based upon work supported by the Marie Curie International Research Staff Exchange Scheme Fellowship PIRSES-GA-2010-269132 AGILENet within the 7th European Community Framework Program.}
%\vspace{-0.12in}
% ------------------------------------------------------------------------
% ------------------------------------------------------------------------
\section{Introduction}
Coded caching \cite{maddah2014fundamental} alleviates network congestion during peak-traffic hours, known as the \textit{delivery phase}, by placing some of the data in the cache memories at the network edge during off-peak hours, known as the \textit{placement phase}. Reference \cite{maddah2014fundamental} has shown that the joint design of the two phases leads to significant reduction in the delivery load, which is achieved by designing the cache contents in the placement phase in a manner that allows serving the users using multicast transmissions in the delivery phase. In turn, there exists a fundamental trade-off between the delivery load on the server and the cache sizes in the network, which has been studied in several setups \cite{maddah2014fundamental,yang2018coded,sengupta2016layered,
ibrahim2017centralized,ibrahim2018coded,ibrahim2017optimization,bidokhti2017benefits,
ibrahim2018distortion,tian2017uncoded,gomez2016fundamental}. Recently, references \cite{yang2018coded,sengupta2016layered,ibrahim2017centralized,ibrahim2018coded} have studied the effect of heterogeneity in cache sizes at the users on the delivery load memory trade-off. In particular, we have shown that placement and delivery schemes jointly optimized with respect to given cache sizes provide significant improvement over schemes tailored to uniform cache sizes. Additionally, cache sizes at the end users may be optimized for further gain \cite{ibrahim2017optimization,bidokhti2017benefits}. All of these consider placement of partial files in the caches, i.e., uncoded placement.

As eluded to above coded caching schemes are often categorized according to whether in the placement phase coding over the files is utilized or not. In caching with uncoded placement, the server places uncoded pieces of each file in the cache memories of the users \cite{maddah2014fundamental,sengupta2016layered,ibrahim2017centralized,ibrahim2018coded}. Alternatively, in systems with \textit{coded placement}, the server places coded pieces of the files in the users' caches which are decoded using the transmissions in the delivery phase\cite{tian2017uncoded,gomez2016fundamental}. While uncoded placement is sufficient for some systems, clearly, coding over files in general has the potential to perform better.

For systems with equal cache sizes, references \cite{tian2017uncoded,gomez2016fundamental} have shown that coded placement is beneficial in the small memory regime when the number of files is less than or equal the number of users. Recent reference \cite{cao2018coded} has shown that coded placement is essential in achieving the optimal delivery load in a two-user system when the cache sizes of the two users differ.

In this work, we show that coded placement in systems where the users are equipped with heterogeneous cache sizes, outperforms uncoded placement. We show that coded placement not only increases the local caching gain at the users, but also increases the multicast gain in the system. In particular, we propose a caching scheme with coded placement for three-user systems that illustrates the reduction in the worst-case delivery load compared to the best caching scheme with uncoded placement in \cite{ibrahim2018coded}. In our proposed scheme, users cache both uncoded and coded pieces of the files, and users with large memories decode the cached coded pieces using the transmissions intended to serve users with smaller memories. We observe that the gain from coded placement increases as the difference between the cache sizes increases, and decreases with the number of files. 
We extend the proposed scheme to systems with more than three end-users and show the reduction in delivery load with coded placement. This work thus establishes the first result showing a clear benefit of coded placement in a heterogeneous caching systems with three or more users.

\begin{figure}[t]
\includegraphics[scale=1.2]{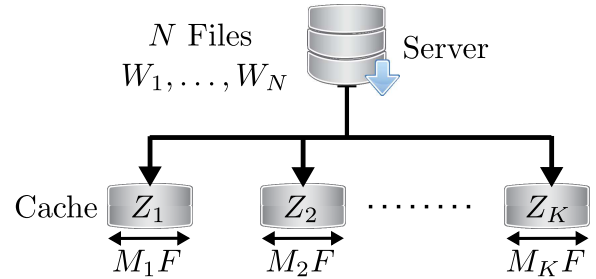}
\centering
\caption{Caching system with heterogeneous cache sizes.}\label{fig:sys_model}
\vspace{-0.2in}
\end{figure} 

\textit{Notation:} Vectors are represented by boldface letters, $ \oplus$ refers to the binary XOR operation, $|W|$ denotes cardinality of $W$, $\mc A \setminus \mc B $ denotes the set of elements in $\mc A$ and not in $\mc B $, $[K] \triangleq \{1,\dots,K\}$, and $ \phi$ denotes the empty set.

%--------------------------------------------------------------------------------------
%--------------------------------------------------------------------------------------

\section{System model}\label{sec_sysmod}
We consider a caching system where a single server is connected to $K$ users via a shared error-free multicast link \cite{maddah2014fundamental}, as shown in Fig. \ref{fig:sys_model}. The server has access to a library $\{ W_{1}, \dots, W_{N}\}$ of $N$ independent files, each with size $F$ symbols over the field $ \mathbb{F}_{2^{r}}$. We consider a heterogeneous system, where user $k$ is equipped with a cache memory of size $M_k F$ symbols. Without loss of generality, we assume that $M_1 \leq M_2 \leq \dots \leq M_K$.
Additionally, we define $m_k=M_k/N$ to denote the memory size of user $k$ normalized by the library size $N F$, i.e., $m_k \in [0,1]$ for $M_k \in [0,N]$.

The system operates over two phases: placement phase and delivery phase. 
In the placement phase, the server populates the users' cache memories without the knowledge of users' demands that will be made in the delivery phase. The cached content by user $k$ is denoted by $Z_k$, i.e., $|Z_k|\leq M_kF$. In the delivery phase, user $k$ requests a file $W_{d_k}$ from the server. The users' demands are uniform and independent as in \cite{maddah2014fundamental}. To serve the users' demands, the server transmits a sequence of unicast/multicast signals, $X_{\mc T, \bm d}$. At the end of the delivery phase, user $k$ must be able to decode $\hat W_{d_k}$ reliably. Formally, for a given normalized cache size vector $\bm m \triangleq [m_1,\dots,m_K]$, the worst-case delivery load $R(\bm m) \triangleq  \sum_{\mc T} |X_{\mc T, \bm d}|/F$ is said to be achievable if for every $\epsilon > 0$ and large enough $F$, there exists a caching scheme such that $\max_{\bm d, k \in [K]} Pr(\hat W_{d_k}\neq W_{d_k})\leq \epsilon$. % Moreover, the infimum over all achievable delivery loads is denoted by $R^*(\bm m)$.  %we have the following definition.% , i.e., for any $\epsilon >0$, we have $\max_{\bm d, k \in [K]} Pr(\hat W_{d_k}\neq W_{d_k})\leq \epsilon. $ 
%\begin{Definition} 
%For a given normalized cache size vector $\bm m$, the delivery load $R(\bm m)$ is said to be achievable if for every $\epsilon > 0$ and large enough $F$, there exists a caching scheme such that $\max_{\bm d, k \in [K]} Pr(\hat W_{d_k}\neq W_{d_k})\leq \epsilon$. Moreover, the infimum over all achievable delivery loads is denoted by $R^*(\bm m)$. 
%\end{Definition}
Our achievability scheme utilizes maximum distance separable (MDS) codes which are defined as follows. 
\begin{Definition} \cite{macwilliams1977theory}
An $(n,k)$ maximum distance separable (MDS) code is an erasure code that allows recovering  $k$ initial information symbols from any $k$ out of the $n$ coded symbols. In a systematic $(n,k)$-MDS code the first $k$ symbols in the output codeword is the information symbols. That is, we have
\begin{align}
[i_1,\dots,i_k] \bm G_{k \times n} &= [i_1,\dots,i_k] [ \bm I_{k \times k} \ \bm P_{k \times n-k}] \nonumber \\ &=[i_1,\dots,i_k,c_{k+1},\dots,c_n ],
\end{align}
where $\bm G_{k \times n}$ is the code generator matrix and $\bm I_{k \times k}$ is an identity matrix. 
\end{Definition}
For a systematic $(2N-j,N)$ MDS-code, we define 
\begin{align}
\sigma_j([i_1,\dots,i_N]) \triangleq  [i_1,\dots,i_N] \bm P_{N \times N-j}
\end{align}
to denote the $N-j$ parity symbols in the output codeword. Note that $\sigma_j([i_1,\dots,i_N]) $ represents $N-j$ independent equations in the information symbols $[i_1,\dots,i_N]$. For example, $\sigma_1([i_1,\dots,i_N]) = [i_1 \oplus i_2, i_2 \oplus i_3, \dots, i_{N-1} \oplus i_N]$.

% \cite{dutta2018optimal}

%--------------------------------------------------------------------------------------
%--------------------------------------------------------------------------------------

\section{Results}\label{sec_results}
In this section, we present our results showing the reduction in the delivery load when coded placement is utilized. In Theorem \ref{thm_3UE}, we characterize an achievable delivery load for three-user systems, that is lower than the minimum worst-case delivery load under uncoded placement, given by  
\begin{align}\label{eqn_3ue_uncoded}
R^*_{uncoded}&(\bm m) = \max \Big\lbrace  3 \! - \! 3 m_1 \! - \! 2 m_2 \! - \! m_3,  \nonumber \\ & \! \! \! \! \frac{5}{3} \! - \! \frac{3 m_1 \! - \! 2 m_2 \! - \! m_3}{3}\! ,   2  -  2 m_1 \! - \! m_2,  \  1  -  m_1  \Big\rbrace,
\end{align}
which has been characterized in \cite{ibrahim2018coded}.
\begin{theorem}\label{thm_3UE}
For a three-user system with $N \geq 4$ and $m_1 \leq m_2 \leq m_3 $, the worst-case delivery load   
\begin{align}\label{eqn_thm_3ue}
R_{\text{coded}}&(\bm m, N) =\! \max \Big\lbrace  3 \! - \! 3 m_1 \! - \! 2 m_2 \! - \! m_3 \! - \! \frac{3(m_2 \! - \! m_1)}{N-1} \! \nonumber \\ & - \! \frac{2(m_3 \! - \!m_2)}{N-2}, \ \frac{5}{3} \! - \! \frac{3 m_1 \! - \! 2 m_2 \! - \! m_3}{3} \! - \! \frac{m_2 \! - \! m_1}{3(N-1)}, \nonumber \\ &  \ 2  -  2 m_1 \! - \! m_2 \! - \! \frac{m_2 \! - \! m_1}{N \! - \! 1}, \  1  -  m_1  \Big\rbrace \! ,
\end{align}
is achievable with coded placement. 
\end{theorem}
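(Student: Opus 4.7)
The plan is to establish achievability by explicitly constructing caching schemes at a small number of strategically chosen corner points in the $(m_1,m_2,m_3)$--region, and then filling in the rest of the region by memory--sharing between those corner points and the trivial points at the boundary ($m_k\in\{0,N\}$). Each of the four terms inside the maximum in (\ref{eqn_thm_3ue}) will correspond to the dominant linear piece of the resulting piecewise--linear envelope in a particular memory regime; structurally they are obtained from the four terms of $R^*_{\text{uncoded}}(\bm m)$ in (\ref{eqn_3ue_uncoded}) by subtracting the coded--placement savings $\frac{3(m_2-m_1)}{N-1}$, $\frac{2(m_3-m_2)}{N-2}$, and $\frac{m_2-m_1}{3(N-1)}$, while the $1-m_1$ piece is inherited unchanged because no coded placement can help when the delivery is dominated by a single broadcast of a file to the smallest--memory user.

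At a representative corner point I would split every file $W_n$ into subfiles indexed by the subset $\mc S\subseteq[3]$ of users that cache them, exactly as in the uncoded placement scheme of \cite{ibrahim2018coded}. On top of the baseline uncoded subfiles stored at users $2$ and $3$, I give user $2$ additional MDS--coded content of the form $\sigma_1(\cdot)$ over an $N$--tuple of subfiles (one per file), and user $3$ additional content of the form $\sigma_2(\cdot)$ over a different $N$--tuple; the $N-1$ (respectively $N-2$) parity symbols generated by a $(2N-1,N)$--MDS code (respectively $(2N-2,N)$--MDS code) explain the $N-1$ and $N-2$ denominators in the correction terms. The delivery phase then reuses the unicast/multicast transmissions of the uncoded scheme, and the key decoding identity is that every XOR transmission targeted at a smaller--memory user already carries one information symbol per file from the row over which the larger--memory user's parities were computed; once that user knows its own requested symbol from the parity, the remaining symbols in the XOR are decoded by the intended smaller--memory user. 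Balancing the number of parity symbols against the excess memory $m_j-m_{j-1}$ gives the saving as a linear function of that difference, producing the corrections above.

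The main obstacle will be the combinatorial bookkeeping that ensures the cached parities at the larger--memory users are simultaneously decodable from the \emph{same} transmissions used in the baseline uncoded delivery, for \emph{every} worst--case demand vector $\bm d$. Concretely, I must align the index sets that feed the $\sigma_j$ operators with the exact subfile indices appearing in each XOR transmission, so that recovering a large--memory user's missing symbol is a one--step linear combination of its parities with the broadcast XOR, while still allowing the original intended user to decode. Verifying this simultaneously across the three users, all permutations of requested files, and all corner--point constructions, and then checking that memory--sharing between these constructions and the trivial endpoints reproduces exactly the four--term envelope of (\ref{eqn_thm_3ue}), is the substantive part of the argument; the rest reduces to a direct count of transmitted symbols divided by $F$.
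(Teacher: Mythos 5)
Your core decoding mechanism is the right one and matches the paper's: the larger-cache users store MDS parities ($\sigma_1$, $\sigma_2$) over $N$-tuples of subfiles and invert them using signals already sent for the smaller-cache users, which is exactly where the $N-1$ and $N-2$ denominators come from; and your corner-point-plus-memory-sharing architecture could in principle replace the paper's explicit schemes on four parameterized regions, since the target expression is a maximum of affine functions and each active region is a polytope. However, two steps in your plan are genuinely off. First, you cannot keep ``the baseline uncoded subfiles'' and ``reuse the unicast/multicast transmissions of the uncoded scheme'' while adding parities on top: that either violates the cache-size constraints or produces no load reduction. The savings come from jointly redesigning the partition and the delivery: in the paper's Region II the subfile sizes are rebalanced and an \emph{extra} multicast $X''_{\{1,2\},\bm d}=W^{(2)}_{d_2,\{1,3\}}\oplus W^{(2)}_{d_1,\{2,3\}}$ is introduced whose additional role is to let user 3 invert its $\sigma_1$ parity, and in Region III dedicated unicasts $X_{\{1\},\bm d},X_{\{2\},\bm d}$ carry precisely the information symbols users 2 and 3 need to decode their coded cache content. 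Your proposal does not explain how such double-duty signals arise if the delivery is frozen to the uncoded one.

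Second, your rationale for the $1-m_1$ term (``no coded placement can help'') is wrong in the direction that matters. While $1-m_1$ cannot be beaten, in the enlarged region where it dominates the coded expression---in particular when $m_1+m_2<1$---uncoded placement cannot \emph{achieve} $1-m_1$ at all; the paper needs a dedicated coded scheme there (its Region IV), in which users 2 and 3 cache $\sigma_1$-parities of $W_{n,\{1,2\}}$ and $W_{n,\{1,3\}}$ and decode them from the multicasts serving user 1. If, as your reasoning suggests, the corner points supporting this piece use uncoded placement, the vertices with $m_1+m_2<1$ (e.g., $m_1=0$, $m_2$ near $1-1/N$) only achieve the strictly larger uncoded load, and your memory-sharing envelope will lie strictly above the claimed $R_{\text{coded}}$ on part of that region; you would need either coded constructions at those vertices or an argument that they are covered by sharing with vertices on the boundaries of the neighboring regions, neither of which appears in the proposal. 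Beyond this, the verification you defer---aligning parity rows with the transmissions for every demand vector and checking that sharing over the $N$-dependent region vertices reproduces the four-term envelope---is exactly where the paper's work lies, so as written the proposal is a plausible program rather than a complete proof.
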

\begin{proof}
The reduction in the delivery load in (\ref{eqn_thm_3ue}) compared to (\ref{eqn_3ue_uncoded}) is achieved by placing coded pieces of the files at users $2$ and $3$, which are decoded in the delivery phase. For example, in order to achieve $R_{\text{coded}}(\bm m, N)=R^*_{uncoded}(\bm m)\! - \! \frac{m_2 \! - \! m_1}{3(N-1)}, $ part of the multicast signal to users $\{1,2\}$ is utilized in decoding the cached pieces at user $3$. The proposed caching scheme is presented in Section \ref{sec_cach_3UE}.
\end{proof}
Next theorem characterizes the gain achieved by coded placement in the small memory regime, where the unicast signals intended for users $\{1, \dots,k\}$ are utilized in decoding the cache content at users $\{k+1,..,K\}$.
\begin{theorem}\label{thm_small_mem}
For a $K$-user system with $N \geq K+1$, $m_1 \leq m_2 \leq \dots \leq m_K $, and $$ \sum_{i=1}^{K} m_i + \sum_{i=2}^{K} \frac{(i-1)(K-i+1)(m_i-m_{i-1})}{N-i+1} \leq 1,$$ the worst-case delivery load   
\begin{align}
R_{\text{coded}}&(\bm m, N, K) = R^*_{uncoded}(\bm m) \! \nonumber \\ & - \! \sum_{i=2}^{K} \frac{(i \! - \! 1)(K \! - \! i \! + \! 1)(K \! - \! i \! + \! 2)(m_i -m_{i-1})}{2(N-i+1)}, 
\end{align}
is achievable with coded placement, where 
\begin{align}
R^*_{uncoded}(\bm m) =K \! - \! \sum_{i=1}^{K} (K-i+1) m_i,
\end{align}
is the minimum worst-case delivery load with uncoded placement for $\sum_{i=1}^{K} m_i \leq 1$ \cite{ibrahim2017centralized,ibrahim2018coded}.
\end{theorem}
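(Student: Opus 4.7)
The plan is to extend the three-user construction of Theorem~\ref{thm_3UE} to $K$ users in the small-memory regime. The starting point is the uncoded scheme that attains $R^*_{uncoded}(\bm m)=K-\sum_{i=1}^{K}(K-i+1)m_i$: each file $W_n$ is split into uncoded fragments of fractional sizes $m_1, m_2,\dots,m_K$ (plus an uncached remainder), with the $k$-th fragment of every file cached at user $k$; demands are served by pairwise XOR multicasts between every pair of users together with residual unicasts. A short bookkeeping (Abel summation) shows that the contribution of the jump $m_i-m_{i-1}$ to $R^*_{uncoded}$ is exactly $\binom{K-i+2}{2}$, matching the number of multicast/unicast opportunities that involve the $K-i+1$ users $\{i,\dots,K\}$ together with the uncached remainder.

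I would then add a coded layer stage by stage for $i=2,\dots,K$. At stage $i$, a block of fractional size proportional to $(m_i-m_{i-1})$ is carved from each of the $N$ files and partitioned into packets, and the systematic MDS map $\sigma_{i-1}$ defined in the preliminaries (parameters $(2N-i+1, N)$) is applied packet-by-packet across the $N$ files to produce $N-i+1$ parity symbols per packet. Every user $k\ge i$ caches the full parity block, identical across such users. The combined hypothesis $\sum_i m_i+\sum_{i=2}^{K}(i-1)(K-i+1)(m_i-m_{i-1})/(N-i+1)\le 1$ is the global resource budget that ensures both that the uncoded fragments still fit within $m_k$ at every user after the coded pieces are inserted, and that all server transmissions can be scheduled as in the baseline.

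In the delivery phase the server executes the baseline uncoded schedule, but reinterprets the unicast transmissions destined for users $\{1,\dots,i-1\}$ as the syndrome information needed by users $\{i,\dots,K\}$ to solve their MDS systems. For each packet, a user $k\ge i$ overhears $i-1$ systematic symbols of $W_{d_1},\dots,W_{d_{i-1}}$ and combines them with the $N-i+1$ parity symbols in its cache to obtain $N$ independent linear equations in the $N$ file symbols of that packet; by the MDS recovery property it decodes its own desired symbol of $W_{d_k}$ "for free" from the server's point of view. Per packet this eliminates $K-i+1$ symbols from the delivery load at the cost of one reused unicast per smaller user; aggregating across the $\binom{K-i+2}{2}$ baseline positions where $(m_i-m_{i-1})$ contributes to $R^*_{uncoded}$ yields the per-stage saving $(i-1)(K-i+1)(K-i+2)(m_i-m_{i-1})/[2(N-i+1)]$, which sums over $i=2,\dots,K$ to the expression in the theorem.

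The main obstacle is verifying that the parity blocks introduced at distinct stages $i$ remain jointly decodable at every user and that inserting them does not disturb the baseline multicast opportunities exploited in the uncoded scheme. The systematic form of $\sigma_j$ and the hypothesis $N\ge K+1$ (which keeps $N-i+1\ge 1$ for all $i\le K$ and leaves enough room for the MDS generator matrices to have the required ranks over $\mathbb{F}_{2^r}$) are what make this go through. A secondary but purely mechanical task is the careful counting that (i) the combined uncoded plus coded storage at each user $k$ sums to at most $m_k$, and (ii) the aggregated delivery load matches the stated expression; both follow once the staged construction above is written out explicitly.
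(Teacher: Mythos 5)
Your high-level idea matches the paper's: layer the cache increments $m_i-m_{i-1}$, store each layer as the parities $\sigma_{i-1}$ of a systematic $(2N-i+1,N)$ MDS code applied across the $N$ files (so the useful content per file is inflated by $N/(N-i+1)$), and let users $i,\dots,K$ decode by overhearing the layer-$i$ pieces of $W_{d_1},\dots,W_{d_{i-1}}$ that the server must send to the small-cache users anyway. The gap is in the placement you actually describe: you put an \emph{identical} parity block at every user $k\ge i$ at stage $i$. In the paper's scheme user $k$ has its \emph{own disjoint} layer-$i$ subfiles $W_{n,k}^{(i)}$, of size $\frac{N(m_i-m_{i-1})}{N-i+1}F$ per file, and caches $\sigma_{i-1}\big([W_{1,k}^{(i)},\dots,W_{N,k}^{(i)}]\big)$. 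This disjointness is not cosmetic: (a) the layer then covers $(K-i+1)\frac{N(m_i-m_{i-1})}{N-i+1}$ of every file across the large-cache users, shrinking the uncached remainder $W_{n,\phi}$ that must be unicast to all $K$ users, and (b) the pairwise multicasts $X_{\{k,j\},\bm d}$ among users $\ge i$ exchange these inflated, distinct fragments, which is where most of the factor $\binom{K-i+2}{2}$ in the saving comes from, alongside the $(i-1)(K-i+1)$ unicast pieces $W_{d_l,j}^{(i)}$ that double as decoding side information.

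With replicated blocks, each stage covers only one block per file while consuming the same memory at every large user, and there are no exchange gains among users $\ge i$ (they all hold the same thing). Working out the load of your construction at the same memory point gives a coefficient of $(K-i+1)\frac{N}{N-i+1}$ for $m_i-m_{i-1}$ instead of the claimed $\binom{K-i+2}{2}\frac{N}{N-i+1}$; e.g., for $K=3$, $N=4$, $i=2$ your coefficient is $8/3$, which is smaller than the uncoded coefficient $3$ — replication is actually worse than uncoded placement for that layer, because duplicating identical content across caches wastes memory. Consequently the final step of your argument ("aggregating across the $\binom{K-i+2}{2}$ baseline positions") is asserted rather than derived and does not hold for a single shared block. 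The fix is precisely the paper's construction: user-specific disjoint layer blocks, unicasts $X_{\{k\},\bm d}=\bigcup_{i=k+1}^{K}\bigcup_{j=i}^{K} W_{d_k,j}^{(i)}$ (plus $W_{d_k,\phi}$) serving simultaneously as demands for user $k$ and as systematic symbols for users $j\ge i$, and pairwise XORs carrying all layers up to $\min(k,j)$; the theorem's hypothesis then only guarantees $|W_{n,\phi}|\ge 0$, while the per-user memory constraints are met with equality.
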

\begin{proof}
The caching scheme is detailed in Section \ref{sec_cach_small_mem}.
%In Section \ref{sec_cach_small_mem}, we show that the unicast signals to users $\{1,\dots,k\} $ can be used in decoding the cached coded pieces at users $\{k\!+\!1,\dots,K\}$
\end{proof}
\begin{remark} For given $K$ and $\bm m$, $\lim\limits_{N \rightarrow \infty} R_{\text{coded}}(\bm m, N,K) = R^*_{uncoded}(\bm m,K)$. That is, the gain due to coded placement decreases with $N$ and is negligible for $N > > K$. 
\end{remark}

%--------------------------------------------------------------------------------------
%--------------------------------------------------------------------------------------
\section{Proof of Theorem \ref{thm_3UE}}\label{sec_cach_3UE}
In this section, we present our caching schemes for there-user systems. The achievable delivery load in Theorem \ref{thm_3UE} consists of the following regions
\begin{itemize}
\item Region $\rm I$: If $\sum_{i=1}^3 m_i \! + \! \frac{2(m_2\! - \!m_1)}{N-1} \! + \! \frac{2(m_3\! - \!m_2)}{N-2} \leq 1$, then 
$ R_{\text{coded}} = 3 \! - \! 3 m_1 \! - \! 2 m_2 \! - \! m_3 \! - \! \frac{3(m_2 \! - \! m_1)}{N-1} \! - \! \frac{2(m_3 \! - \!m_2)}{N-2}. $
\item Region $\rm II$: If $\sum_{i=1}^3 m_i  +  \frac{2(m_2 - m_1)}{N-1}  +  \frac{2(m_3 - m_2)}{N-2} > 1$, $N m_3  \leq  (N  +  3) m_2  +  3(N  -  2) m_1  -  (N -  1),$ and $N m_3  \leq  2(N  -  1)-(2 N  -  3) m_2,$ then $ R_{\text{coded}} = 5/3  -  m_1 - 2 m_2/3  -  m_3/3  -  (m_2 \! - \! m_1)/(3(N-1)). $
\item Region $\rm III$: If $\sum_{i=1}^3 m_i  +  \frac{2(m_2 - m_1)}{N-1}  +  \frac{2(m_3 - m_2)}{N-2} > 1$, $N m_3  >  (N  +  3) m_2  +  3(N  -  2) m_1  -  (N -  1),$ and $N m_2  +(N-2) m_1 \leq N-1,$ then $ R_{\text{coded}} = 2  -  2m_1 - m_2 -  (m_2 \! - \! m_1)/(N-1). $
\item Region $\rm IV$: If $N m_2  +(N-2) m_1 > N-1,$ and $N m_3  >  2(N  -  1)-(2 N  -  3) m_2,$ then $ R_{\text{coded}} = 1  -  m_1. $
\end{itemize}
Region $\rm I$ is a special case of Theorem \ref{thm_small_mem} which will be explained in Section \ref{sec_cach_small_mem}. Next, we consider regions $\rm II$ to $\rm IV$. %The caching scheme in region $\rm I$ is presented in section \ref{sec_cach_small_mem}.  
%%%%%%%%%%%%%%%%%%%%%%%%%%%%%%%%%%%%%%%%%%%%%%%%%%%%%%%%%%%%%%%%%%%%%%%%%%%%%%%%%%%%%%%
\subsection{Region $\rm II$}

\subsubsection{Placement Phase} Each file $W_n$ is split into subfiles $W_{n,1},$ $W_{n,\{2\}},$ $W_{n,\{3\}},$ $W_{n,\{1,2\}},$ $\big\{ W_{n,\{1,3\}}^{(1)},W_{n,\{1,3\}}^{(2)} \big\},$ and $\big \{ W_{n,\{2,3 \}}^{(1)}, W_{n,\{2,3 \}}^{(2)}, W_{n,\{2,3 \}}^{(3)}, W_{n,\{2,3 \}}^{(4)} \big\},$ such that 
\begin{align}
|W_{n,\{1\}}|&=\Big( \frac{2}{3}-m_1-\frac{N(m_3 \! - \! m_2)}{3(N \! - \! 1)} \Big) F, \\
|W_{n,\{2\}}|&=|W_{n,\{3\}}|=|W_{n,\{1\}}|-(m_2-m_1)F, \\
|W_{n,\{2,3 \}}^{(3)}|&=|W_{n,\{2,3 \}}^{(4)}|=(m_2-m_1)F, \\
|W_{n,\{1,2\}}|&=|W_{n,\{1,3 \}}^{(1)}|=|W_{n,\{2,3 \}}^{(1)}| \nonumber \\  &=\Big( m_1-1/3-\frac{N(m_3 \! - \! m_2)}{3(N \! - \! 1)} \Big) F, \\
|W_{n,\{1,3 \}}^{(2)}|&=|W_{n,\{2,3 \}}^{(2)}|=\frac{N(m_3 \! - \! m_2)}{(N \! - \! 1)} F,
\end{align}
where all subfiles are cached uncoded except for $ W_{n,\{2,3 \}}^{(2)}, \forall n$ are encoded before being placed at user $3$. More specifically, the cache contents are given as 
\begin{align}
Z_1 \! &= \! \bigcup_{n=1}^{N} \! \Big( W_{n,\{1\}} \! \bigcup \! W_{n,\{1,2\}} \! \bigcup \! W_{n,\{1,3\}}^{(1)} \! \bigcup \! W_{n,\{1,3\}}^{(2)} \Big), \\
Z_2 \! &= \! \bigcup_{n=1}^{N} \! \Big( W_{n,\{2\}} \! \bigcup \! W_{n,\{1,2\}} \! \bigcup \! \Big( \bigcup_{i=1}^{4} W_{n,\{2,3\}}^{(i)}  \Big) \Big), \\
Z_3 \! &= \! \bigcup_{n=1}^{N} \! \Big( W_{n,\{3\}} \! \bigcup \!  \Big( \bigcup_{i=1}^{2} W_{n,\{1,3\}}^{(i)}  \Big) \! \bigcup \! \Big( \bigcup_{i=1,i \neq 2}^{4} W_{n,\{2,3\}}^{(i)}  \Big) \Big) \nonumber \\ &   \ \ \ \ \bigcup   \sigma_1 \big( [W_{1,\{2,3 \}}^{(2)}, \dots,W_{N,\{2,3 \}}^{(2)} ]\big),.
\end{align}
%where the encoding function $\gamma$ maps $2N$ information symbols to $2N-1$ encoded symbols and is defined as
%\begin{align}
%\! \! \gamma \Big( \! [ i_1,i_2,\dots,i_{2N}] \! \Big) \! \triangleq \! [ i_1 \! \oplus \! i_2,i_2 \! \oplus \! i_3,\dots,i_{2N-1} \! \oplus \! i_{2N}],
%\end{align}
%where the addition $\oplus$ is over the finite field $\mathbb{F}_q$.
%
%\gamma \Big( \! \big[ \! W_{1,\{1,3 \}}^{(2)},\dots, W_{N,\{1,3 \}}^{(2)}, W_{1,\{2,3 \}}^{(2)}, \dots, W_{N,\{2,3 \}}^{(2)} \! \big] \! \Big)
%\begin{align}
%\! \! \gamma \Big( \! [ i_1,i_2,\dots,i_{2N}] \! \Big) \! \triangleq \! [ i_1,i_2,\dots,i_{2N}] \! \begin{bmatrix}
%1, 0, 0, \dots, 0,0\\
%1, 1, 0, \dots, 0,0\\
%0, 1, 1, \dots, 0,0\\
%\vdots \\
%0, 0, 0, \dots, 1,1\\
%0, 0, 0, \dots, 0,1
%\end{bmatrix}
%\end{align}

\subsubsection{Delivery Phase} The server sends the following multicast signals
\begin{align}
X_{\{1,2\} , \bm d}^{\prime}&=  W_{d_2,\{1\}} \oplus \Big( W_{d_1,\{2\}} \bigcup W_{d_1,\{2,3\}}^{(3)} \Big), \\
X_{\{1,3\} , \bm d}&=  W_{d_3,\{1\}} \oplus \Big( W_{d_1,\{3\}} \bigcup W_{d_1,\{2,3\}}^{(4)} \Big), \\
X_{\{2,3\} , \bm d}&=  W_{d_3,\{2\}} \oplus W_{d_2,\{3\}}, \\
X_{\{1,2,3\} , \bm d}&=  W_{d_3,\{1,2\}} \oplus W_{d_2,\{1,3\}}^{(1)} \oplus W_{d_1,\{2,3\}}^{(1)}, \\
X_{\{1,2\} , \bm d}^{\prime \prime}&=  W_{d_2,\{1,3\}}^{(2)} \oplus W_{d_1,\{2,3\}}^{(2)}.
\end{align}

\subsubsection{Achievability} The proposed placement scheme is valid since the cache sizes constraints are satisfied. In the delivery phase, the users retrieve the requested pieces from the multicast signals using the cached subfiles. Additionally, using the multicast signal $X_{\{1,2\} , \bm d}^{\prime \prime}$ and the cached piece $W_{d_2,\{1,3\}}^{(2)}$, user $3$ decodes $W_{d_1,\{2,3\}}^{(2)}$, which is used in retrieving $W_{d_3,\{2,3\}}^{(2)}$ from $\sigma_1 \big( [W_{1,\{2,3 \}}^{(2)}, \dots,W_{N,\{2,3 \}}^{(2)} ]\big) $.

%In order for user $3$ to retrieve $W_{d_3,\{2,3\}}^{(2)}$ from its cache, we go through the following steps.
%\begin{itemize}
%\item User $3$ constructs the equation $W_{d_2,\{1,3\}}^{(2)} \oplus W_{d_1,\{2,3\}}^{(2)}$ from the encoded pieces in its cache.
%\item Using the multicast signal $X_{\{1,2\} , \bm d}^{\prime \prime}$ and $W_{d_2,\{1,3\}}^{(2)} \oplus W_{d_1,\{2,3\}}^{(2)}$, user $3$ decodes $W_{d_1,\{2,3\}}^{(2)} $ and $W_{d_2,\{1,3\}}^{(2)} $.
%\item User $3$ retrieves $W_{d_3,\{1,3\}}^{(2)}$ and $W_{d_3,\{2,3\}}^{(2)}$ from its cache using $W_{d_2,\{1,3\}}^{(2)} $ and $W_{d_1,\{2,3\}}^{(2)} $.  
%\end{itemize}
%For instance, if we consider the binary extension field $\mathbb{F}_{2^2}=\{0,1,\beta,1+\beta\} $, user $3$ constructs $W_{d_2,\{1,3\}}^{(2)} \oplus W_{d_1,\{2,3\}}^{(2)}$ by adding a subset of the encoded pieces in its cache. Additionally, choosing $\beta_1=1$ and $\beta_2=\beta$ enables user $3$ to solve for $W_{d_2,\{1,3\}}^{(2)} $ and $ W_{d_1,\{2,3\}}^{(2)}$. 
%%%%%%%%%%%%%%%%%%%%%%%%%%%%%%%%%%%%%%%%%%%%%%%%%%%%%%%%%%%%%%%%%%%%%%%%%%%%%%%%%%%%%%%
\subsection{Region $\rm{III}$}
\subsubsection{Placement Phase} Each file $W_n$ is split into $W_{n,1},$ $\{ W_{n,\{2\}}^{(1)}, W_{n,\{2\}}^{(2)} \}, \{ W_{n,\{3\}}^{(1)}, W_{n,\{3\}}^{(2)}, W_{n,\{3\}}^{(3)} \},$ $W_{n,\{1,3\}},$ and $\big\{ W_{n,\{2,3\}}^{(1)},W_{n,\{2,3\}}^{(2)} \big\},$ such that 
\begin{align}
&|W_{n,\{1\}}| \! + \! |W_{n,\{1,3\}}|=|W_{n,\{2\}}^{(1)}| \! + \! |W_{n,\{2,3\}}^{(1)}|=m_1 F, \\
&|W_{n,\{1,3\}}|=|W_{n,\{2,3\}}^{(1)}|=m_1 F-|W_{n,\{3\}}^{(1)}|, \\
&|W_{n,\{2\}}^{(2)}|=|W_{n,\{3\}}^{(2)}|=\frac{N(m_2-m_1)}{N-1} F-|W_{n,\{2,3\}}^{(2)}|. 
\end{align}
In particular, we have the following three cases.
\begin{itemize}
\item For $m_3 \leq \! \frac{N \! - \! 2}{N} \! + \! \frac{m_2}{N-1} \! - \! \frac{(2N \! - \! 3)(N \! - \! 2)m_1}{(N-1)(N)}$
\begin{align}
& \! \! \! \! \! \! \! \! \! \! \! \! \! \! |W_{n,\{2,3\}}^{(2)}| \! = \! \! \Big(  \! \sum_{i=1}^{3} \! \! m_i \! + \! \frac{2(m_2 \! - \! m_1)}{N-1} \! + \! \frac{2(m_3 \! - \! m_2)}{N-2} \! - \! 1 \! \Big) \! F, \! \\
& \! \! \! \! \! \! \! \! \! \! \! \! \! \! |W_{n,\{3\}}^{(3)}|=\frac{N(m_3-m_2)}{N-2}F, \  |W_{n,\{3\}}^{(1)}|=m_1 F. 
\end{align}
\item For $ \frac{m_2}{N-1} \! - \! \frac{(2N \! - \! 3)(N \! - \! 2)m_1}{(N-1)(N)} < m_3 \! - \!  \frac{N \! - \! 2}{N} \! \leq  \! \frac{m_2}{N-1} \! - \! \frac{m_1}{(N-1)(N)} $
\begin{align}
\! \! \! \! \! \! |W_{n,\{2,3\}}^{(2)}| \! &= \frac{N(m_2 \! - \! m_1)}{N-1} F,\\
\! \! \! \! \! \! |W_{n,\{3\}}^{(3)}| \! &= \! \Big( 1 \! - \! 2 m_1 \! - \! \frac{N(m_2 \! - \! m_1)}{N-1} \Big)F \! - \! |W_{n,\{3\}}^{(1)}|, \\
\! \! \! \! \! \! |W_{n,\{3\}}^{(1)}|&= \Big( \frac{N \! - \! 2}{2N \! - \! 3} \Big) \Big( 1 \! - \! \frac{(N \! - \! 3) m_1}{(N-2)} \! - \! \frac{N(m_2 \! - \! m_1)}{N-1} \nonumber \\ 
& \ \ \ \ \ - \! \frac{N(m_3 \! - \! m_2)}{N-2} \Big) F.
\end{align}
\item For $ m_3 \! >  \! \frac{N \! - \! 2}{N} \! + \! \frac{m_2}{N-1} \! - \! \frac{m_1}{(N-1)(N)} $
\begin{align}
|W_{n,\{3\}}^{(3)}| \! &= \Big( 1 \! - \! 2 m_1 \! - \! \frac{N(m_2 \! - \! m_1)}{N-1} \Big)F, \\
|W_{n,\{3\}}^{(1)}| \! &= \! 0, \ |W_{n,\{2,3\}}^{(2)}| \! = \! \frac{N(m_2 \! - \! m_1)}{N-1} F,
\end{align}
\end{itemize}
The cache contents are defined as
\begin{align}
Z_1 \! &= \! \bigcup_{n=1}^{N} \! \Big( W_{n,\{1\}} \bigcup W_{n,\{1,3\}} \Big), \\
Z_2 \! &= \! \bigcup_{n=1}^{N} \! \Big( W_{n,\{2\}}^{(1)} \bigcup W_{n,\{2,3\}}^{(1)} \Big) \bigcup \sigma_1 \big( [W_{1,\{2\}}^{(2)}, \dots,W_{N,\{2 \}}^{(2)} ]\big) \nonumber \\ & \ \ \ \bigcup \sigma_1 \big( [W_{1,\{2,3 \}}^{(2)}, \dots ,W_{N,\{2,3 \}}^{(2)} ]\big), \\
Z_3 \! &= \! \bigcup_{n=1}^{N} \! \Big( W_{n,\{3\}}^{(1)} \bigcup W_{n,\{2,3\}}^{(1)} \Big) \bigcup \sigma_1 \big( [W_{1,\{3\}}^{(2)}, \dots,W_{N,\{3 \}}^{(2)} ]\big) \nonumber \\ & \ \ \ \bigcup \sigma_1 \big( [W_{1,\{2,3\}}^{(2)}, \dots ,W_{N,\{2,3 \}}^{(2)} ]\big) \bigcup \sigma_1 \big( [W_{1,\{1,3\}}, \dots \nonumber \\ &  \ \ ,W_{N,\{1,3 \}}]\big) \bigcup \sigma_2 \big( [W_{1,\{3\}}^{(3)}, \dots ,W_{N,\{3 \}}^{(3)} ]\big).
\end{align}

\subsubsection{Delivery Phase:} The server sends the following multicast signals
\begin{align}
\! \! \! \! X_{\{1,2\} , \bm d}& \! =  \! \! \Big( \! W_{d_2,\{1\}} \! \bigcup \! W_{d_2,\{1,3\}} \! \Big) \! \! \oplus \! \! \Big( \! W_{d_1,\{2\}}^{(1)} \! \bigcup \! W_{d_1,\{2,3\}}^{(1)} \! \Big), \! \! \\
\! \! \! \! X_{\{2,3\} , \bm d}& \! =  \! \! \Big( \! W_{d_3,\{2\}}^{(1)} \! \bigcup \! W_{d_3,\{2\}}^{(2)} \! \Big) \!  \oplus  \! \Big( \! W_{d_2,\{3\}}^{(1)} \! \bigcup \! W_{d_2,\{3\}}^{(2)} \! \Big), \! \! \\
\! \! \! \! X_{\{1,3\} , \bm d}& \! = W_{d_3,\{1\}} \oplus W_{d_1,\{3\}}^{(1)}. 
\end{align}
The following unicast signals complete the requested files and help users $\{2,3\}$ in decoding their cache contents.
\begin{align}
X_{\{1\} , \bm d}&  =    W_{d_1,\{2\}}^{(2)}  \bigcup  W_{d_1,\{2,3\}}^{(2)}  \bigcup  W_{d_1,\{3\}}^{(2)} \bigcup W_{d_1,\{3\}}^{(3)}, \\
X_{\{2\} , \bm d}& = W_{d_2,\{3\}}^{(3)}.
\end{align}

\subsubsection{Achievability} User $2$ decodes its cache using $W_{d_1,\{2\}}^{(2)}, W_{d_1,\{2,3\}}^{(2)}$ from $X_{\{1\} , \bm d}$. Similarly, user $3$ decodes its cache using $W_{d_1,\{2,3\}}^{(2)},  W_{d_1,\{3\}}^{(2)},  W_{d_1,\{3\}}^{(3)} $ from $X_{\{1\} , \bm d}$ and  $W_{d_2,\{3\}}^{(3)} $ from $X_{\{2\} , \bm d} $.

%%%%%%%%%%%%%%%%%%%%%%%%%%%%%%%%%%%%%%%%%%%%%%%%%%%%%%%%%%%%%%%%%%%%%%%%%%%%%%%%%%% 
\subsection{Region $\rm{IV}$} 
Next, we consider the case where $ m_1 + m_2 \leq 1$, since uncoded placement is optimal for $ m_1 + m_2 > 1$ \cite{ibrahim2018coded}.

\subsubsection{Placement Phase} Each file $W_n$ is split into $W_{n,\{1,2\}},$ $W_{n,\{1,3\}},$ and $\big\{ W_{n,\{2,3\}}^{(1)},W_{n,\{2,3\}}^{(2)},W_{n,\{2,3\}}^{(3)} \big\},$ such that 
\begin{align}
|W_{n,\{1,2\}}|&= m_1 F - |W_{n,\{1,3\}}|, \\
|W_{n,\{1,2\}}|&=|W_{n,\{2,3\}}^{(2)}|, \ |W_{n,\{1,3\}}|=|W_{n,\{2,3\}}^{(1)}|, \\
|W_{n,\{2,3\}}^{(3)}|&=(1-2m_1)F. 
\end{align}
In particular, we have
\begin{align}
\! \! \! \! |W_{n,\{1,2\}}| \! = \! \begin{cases} \! \!  \Big( \! \frac{N m_2  +  (N \! - \! 2) m_1}{N \! - \! 1} \! - \! 1\Big) F, \ \! \text{if} \ \! m_3 \! \leq \! \frac{(N \!- \!1)  +  m_1}{N}, \! \! \! \! \! \! \\
0, \text{ otherwise.}
\end{cases}
\end{align}
The cache contents are defined as
\begin{align}
Z_1 \! &= \! \bigcup_{n=1}^{N} \! \Big( W_{n,\{1,2\}} \bigcup W_{n,\{1,3\}} \Big), \\
Z_2 \! &= \! \bigcup_{n=1}^{N} \! \Big( W_{n,\{2,3\}}^{(1)} \bigcup W_{n,\{2,3\}}^{(2)} \Big) \bigcup \sigma_1 \big( [W_{1,\{1,2\}}, \dots, \nonumber \\ & \ \ \ W_{N,\{1,2 \}} ]\big)  \bigcup \sigma_1 \big( [W_{1,\{2,3 \}}^{(3)}, \dots ,W_{N,\{2,3 \}}^{(3)} ]\big),\\
Z_3 \! &= \! \bigcup_{n=1}^{N} \! \Big( W_{n,\{2,3\}}^{(1)} \bigcup W_{n,\{2,3\}}^{(2)} \Big) \bigcup \sigma_1 \big( [W_{1,\{1,3\}}, \dots, \nonumber \\ & \ \ \ W_{N,\{1,3 \}} ]\big)  \bigcup \sigma_1 \big( [W_{1,\{2,3 \}}^{(3)}, \dots ,W_{N,\{2,3 \}}^{(3)} ]\big).
\end{align}
\subsubsection{Delivery Phase} The server sends the following signals
\begin{align}
X_{\{1,2\} , \bm d} &= W_{d_2,\{1,3\}} \oplus W_{d_1,\{2,3\}}^{(1)}, \\
X_{\{1,3\} , \bm d} &= W_{d_3,\{1,2\}} \oplus W_{d_1,\{2,3\}}^{(2)}, \\
X_{\{1\} , \bm d} &= W_{d_1,\{2,3\}}^{(3)}.
\end{align}
\subsubsection{Achievability} User $2$ retrieves $W_{d_2,\{1,2\}} $ from its cache using $W_{d_3,\{1,2\}}$ which is extracted from $X_{\{1,3\} , \bm d}$. Similarly, user $3$ retrieves $W_{d_3,\{1,3\}} $ by utilizing $X_{\{1,2\} , \bm d}$.
%--------------------------------------------------------------------------------------
%--------------------------------------------------------------------------------------
\section{Proof of Theorem \ref{thm_small_mem}}\label{sec_cach_small_mem}
In this section, we explain the caching schemes that achieves the delivery load in Theorem \ref{thm_small_mem}.
\subsection{Placement Phase}
File $W_n$ is divided into $K(K \! + \! 1)/2  +  1 $ subfiles, $W_{n,\phi},W_{n,1}^{(1)},$ $\big\{ W_{n,2}^{(1)},W_{n,2}^{(2)} \big\},$ $\dots,$ $\big\{W_{n,K}^{(1)},\dots ,W_{n,K}^{(K)}\big\}$, such that  
\begin{align}
&|W_{n,k}^{(1)}|=m_1 F, \ \forall n, \\
&|W_{n,k}^{(i)}|= \frac{N(m_{i} \! - \! m_{i-1})}{(N\! - \! i \! + \! 1)}  F, \ i=2,\dots,k, \forall n,  \\
 &|W_{n,\phi} | \! = \! \Big( \! 1 \! - \! \! \! \sum_{i=1}^{K} \! \! m_i \! - \! \! \! \sum_{i=2}^{K} \! \! \frac{(i\! - \! 1)(K \! - \! i \! + \! 1)(m_i \! - \! m_{i-1})}{N-i+1} \Big) \! F, \! \! 
\end{align}
% &|W_{n,\phi} | \! = \! \Big( \! 1 \! - \! \! \! \sum_{i=1}^{K} \! \! m_i \! - \! \! \! \sum_{i=2}^{K} \! \! \frac{(i\! - \! 1)(K \! - \! i \! + \! 1)(m_i \! - \! m_{i-1})}{N-i+1} \Big) \! F, \! \! 
where $W_{n,\phi} $ is available only at the server. User $k$ caches subfiles $W_{1,k}^{(1)}, \dots, W_{N,k}^{(1)}$ uncoded and the MDS encoded pieces $\sigma_{i-1} \big([W_{1,k}^{(i)},\dots,W_{N,k}^{(i)} ]\big)$ for $i=2,\dots,k$. In turn, the cache contents of user $k$ is defined as   
\begin{align}
Z_k= \Big( \bigcup_{n=1}^{N} W_{n,k}^{(1)} \Big) \bigcup  \Big( \bigcup_{i=2}^{k}  \sigma_{i-1} \big([W_{1,k}^{(i)},\dots,W_{N,k}^{(i)}]\big) \Big),
\end{align}
%which is illustrated in Fig. for $K=3$.  

\subsection{Delivery Phase}

The server sends the following unicast signals
\begin{align}
X_{\{K\} , \bm d}&=W_{d_K,\phi}, \\
X_{\{k\} , \bm d}&= \bigcup_{i=k+1}^{K} \bigcup_{j=i}^{K}  W_{d_{k},j}^{(i)} , \ \forall k\in [K \!- \!1], 
\end{align}
where the unicast signals $X_{\{1\} , \bm d},\dots,X_{\{k\} , \bm d} $ are used by users $\{k\!+\!1,\dots, K \}$ in decoding their cache contents.

Next, the server sends the pairwise multicast signals
\begin{align}
X_{\{k,j\} , \bm d}= \Big( \bigcup_{i=1}^{k} W_{d_j,k}^{(i)} \Big) \oplus \Big( \bigcup_{i=1}^{k} W_{d_k,j}^{(i)} \Big),
\end{align}
for $k=1,\dots,K$ and $j=k+1,\dots,K$. In turn, the delivery load is given as
\begin{align}
R_{\text{coded}}&= \sum_{k=1}^{K} |X_{\{k\} , \bm d}|/F + \sum_{k=1}^{K} \sum_{j=k+1}^{K} |X_{\{k,j\} , \bm d}|/F, \\
&= K \! - \! \sum_{i=1}^{K} (K-i+1) m_i \! \nonumber \\ & - \! \sum_{i=2}^{K} \frac{(i \! - \! 1)(K \! - \! i \! + \! 1)(K \! - \! i \! + \! 2)(m_i -m_{i-1})}{2(N-i+1)}.
\end{align}

\subsection{Achievability}
First, the cache size constraints are satisfied, since  
\begin{align}
|Z_k|&=N m_1 F + \sum_{i=2}^{k} (N \! - \! i \! + \! 1) |W_{n,k}^{(i)}|, \\
&= M_1 F + \sum_{i=2}^{k} (M_i -M_{i-1}) F= M_k F.
\end{align}
In the delivery phase, user $k$ reconstructs the requested file $W_{d_k} $ by going through the following steps.
\begin{itemize}
\item Subfile $W_{d_k,k}^{(1)} $ is uncoded and in turn can be directly retrieved from the cache memory.
\item Subfiles $W_{d_1,k}^{(i)},\dots,W_{d_{i-1},k}^{(i)} $ are extracted from the unicast signals $X_{\{1\} , \bm d},\dots,X_{\{k-1\} , \bm d}$. 
\item Subfile $W_{d_k,k}^{(i)}$ is retrieved from the encoded pieces $ \sigma_{i-1} \big([W_{1,k}^{(i)},\dots,W_{N,k}^{(i)}]\big) $ using $W_{d_1,k}^{(i)},\dots,W_{d_{i-1},k}^{(i)} $. 
\item Subfiles $W_{d_k,j}^{(k+1)},\dots,W_{d_k,j}^{(j)} $ where $j \in \{k+1,\dots,K\}$ and $W_{d_k, \phi} $ are retrieved from the unicast signal $X_{\{k\} , \bm d}$. 
\item Subfiles $W_{d_k,j}^{(1)},\dots,W_{d_k,j}^{(k)} $ are retrieved from the multicast signals $X_{\{k,j\} , \bm d}, j \in [K]\setminus \{k\} $. 
\end{itemize}

%--------------------------------------------------------------------------------------
%--------------------------------------------------------------------------------------
%\vspace{+0.1in}
\section{Numerical Results}

In Fig. \ref{fig:comp_dlv}, we compare the worst-case delivery load achieved by exploiting coded placement with the minimum worst-case delivery load assuming uncoded placement in a three-user system where $N=4$ and $m_k = \alpha \ \! m_{k+1}$. Fig. \ref{fig:comp_dlv} shows that the gain achieved by coded placement increases with the heterogeneity in cache sizes. We also observe that the gain is higher when the total memory is small.

\begin{figure}[t]
%\vspace{-.1 in}
\includegraphics[scale=0.475]{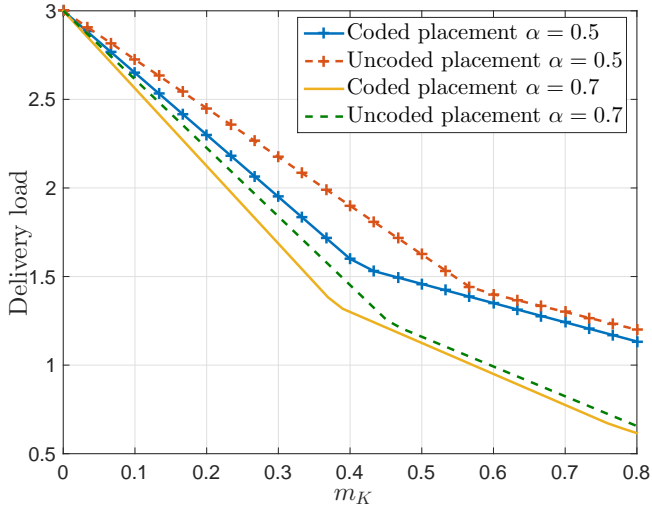}
\centering
\caption{Comparing the achievable delivery load assuming coded placement with the minimum delivery load under uncoded placement, for $K=3$, $N=4$, and $ m_k= \alpha\ \! m_{k+1}$.}\label{fig:comp_dlv}
\end{figure}

The delivery load achieved by utilizing coded placement in Theorem \ref{thm_small_mem} is compared to the best uncoded placement scheme in Fig. \ref{fig:comp_dlv2}, for $K=10$ and $m_k= 0.7 m_{k+1}$. From Fig. \ref{fig:comp_dlv2}, we observe that the reduction in the delivery load due to coded placement decreases with the number of files $N$. In turn, for a system where $N > > K$, the delivery load achieved with our coded placement scheme is approximately equal to the minimum delivery load under uncoded placement. That is, the coded placement gain is negligible when $N > > K $.

\begin{figure}[t]
%\vspace{-.1 in}
\includegraphics[scale=0.475]{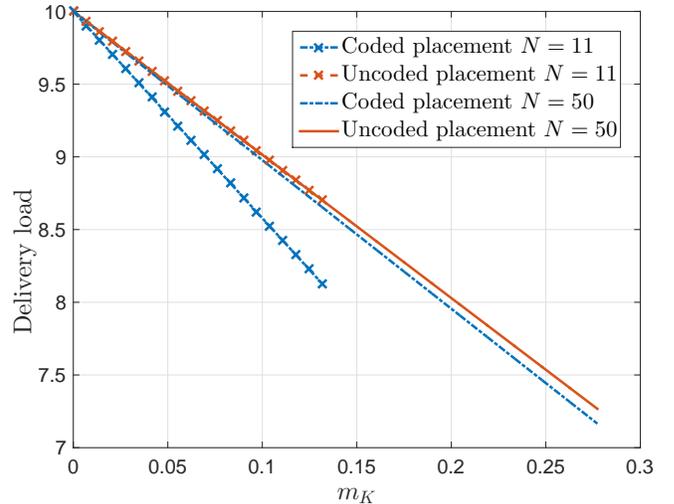}
\centering
\caption{Comparing the achievable delivery load assuming coded placement with the minimum delivery load under uncoded placement, for $K=10$, $\alpha=0.7$, and $ m_k= \alpha\ \! m_{k+1}$.}\label{fig:comp_dlv2}
\end{figure}

\section{Conclusion}\label{sec_con}
%\vspace{-0.02in}
In this paper, we have shown that coded placement leads to significant reduction in the delivery load in systems where the users have different cache sizes. In particular, we have proposed novel coded placement schemes that outperform the best uncoded placement schemes for three-user systems with arbitrary cache sizes and $K$-user systems where the cache sizes satisfy
$ \sum\limits_{i=1}^{K} m_i + \sum\limits_{i=2}^{K} (i \! - \! 1)(K \! - \! i \! + \! 1)(m_i \! - \! m_{i-1})/(N \! - \! i \! + \! 1) \leq 1.$\vspace{+0.05in} 

Our proposed schemes illustrate that the signals intended to serve users with small cache sizes can be used in decoding the cache contents of users with larger cache sizes. Furthermore, we have shown that the gain due to coded placement increases with the heterogeneity in cache sizes and decreases with the number of files.  

%
%This work is a step towards developing caching schemes for systems with practical considerations, such as wireless systems where the users have distinct download rates and non-uniform resources. Future directions include optimizing over cache sizes and consideration of incomplete/inaccurate demand profiles.
%
%\vspace{-0.1in}

\vspace{+0.04in}
\bibliographystyle{IEEEtran}
\bibliography{IEEEabrv,DiversityLib}
%\vspace{-0.05in}

\end{document}